\documentclass[11pt]{article}
\usepackage[margin=1in]{geometry}
\usepackage[T1]{fontenc}
\usepackage[utf8]{inputenc}
\usepackage{lmodern,microtype}
\usepackage{amsmath,amssymb,mathtools,amsthm}
\usepackage{graphicx,booktabs,array,tabularx}
\usepackage{tikz}
\usetikzlibrary{arrows.meta}
\usepackage{xcolor,algorithm}
\usepackage[noend]{algpseudocode}
\usepackage[numbers,sort&compress]{natbib}
\usepackage{xurl,hyperref}
\usepackage[nameinlink,capitalise]{cleveref}
\hypersetup{colorlinks=true,linkcolor=blue!45!black,citecolor=green!35!black,urlcolor=blue!55!black,
 pdftitle={SparseDesign: Scaling Exact Coding-Sequence Design},pdfauthor={Hao Lin and Jingjin Yu},
 pdfsubject={Technical report},pdfkeywords={RNA design, codon optimization, candidate sparsification}}
\newtheorem{theorem}{Theorem}

\newcommand{\turner}{Turner~2004}
\newcommand{\sparsedesign}{\textsc{SparseDesign}}

\newcommand{\cai}{\mathrm{CAI}}

\newcommand{\code}[1]{\texttt{\detokenize{#1}}}

\newcommand{\CorpusRows}{7{,}600}
\newcommand{\RetentionZero}{3.53}
\newcommand{\RetentionFour}{2.15}
\newcommand{\RetentionReductionZero}{28.3}
\newcommand{\RetentionReductionFour}{46.4}
\newcommand{\DensityZero}{0.014341}
\newcommand{\DensityFour}{0.008757}
\newcommand{\RetentionMaxZero}{43.80}
\newcommand{\RetentionMaxFour}{37.23}

\newcommand{\DensityClusterCount}{1565}
\newcommand{\DensitySingletonClusters}{1328}
\newcommand{\DensityLargestCluster}{20}

\newcommand{\DensityBootstrapReplicates}{10000}

\newcommand{\DensityZeroBeta}{1.527}
\newcommand{\DensityZeroRSquared}{0.993}
\newcommand{\DensityZeroClusterBeta}{1.519}
\newcommand{\DensityZeroCalibratedClusterBeta}{1.525}

\newcommand{\DensityZeroClusterCILow}{1.520}
\newcommand{\DensityZeroClusterCIHigh}{1.533}

\newcommand{\DensityZeroHingeReductionPercent}{24.1}

\newcommand{\DensityFourBeta}{1.487}
\newcommand{\DensityFourRSquared}{0.991}
\newcommand{\DensityFourClusterBeta}{1.477}
\newcommand{\DensityFourCalibratedClusterBeta}{1.485}

\newcommand{\DensityFourClusterCILow}{1.480}
\newcommand{\DensityFourClusterCIHigh}{1.495}

\newcommand{\DensityFourHingeReductionPercent}{27.5}
\newcommand{\DensityHostBetaMinimum}{1.377}
\newcommand{\DensityHostBetaMaximum}{1.497}
\newcommand{\DensityHostMedianMinimum}{0.005356}
\newcommand{\DensityHostMedianMaximum}{0.009081}

\newcommand{\ValidationDesignCases}{130}
\newcommand{\ValidationEmittedDesigns}{128}

\newcommand{\HistAblationPairs}{24}

\newcommand{\HistPackedSolveIncreasePercent}{8.18}
\newcommand{\HistPackedRSSReductionPercent}{28.21}
\newcommand{\HistPackedRSSGiB}{14.417}

\newcommand{\NativePanelSize}{12}
\newcommand{\NativeFeasibilityTotal}{96}
\newcommand{\NativeFeasibilitySuccessful}{42}
\newcommand{\NativeFeasibilityTimeouts}{8}
\newcommand{\NativeFeasibilitySkipped}{46}
\newcommand{\NativeFreshRepetitions}{210}
\newcommand{\NativeSuccessfulExecutions}{252}
\newcommand{\NativeDistinctOutputs}{36}
\newcommand{\NativeDistinctRNAs}{30}
\newcommand{\NativeRepetitionsPerCondition}{5}
\newcommand{\NativeMatchedConditions}{4}
\newcommand{\NativeRatioMin}{4.71}
\newcommand{\NativeRatioMax}{6.04}
\newcommand{\NativeMatchedLengths}{82, 95, 255, 310}
\newcommand{\NativeMatchedRatios}{5.08, 4.71, 5.51, 6.04}
\newcommand{\NativeOwnRNARefoldGap}{1.40}
\newcommand{\NativeWallCapSeconds}{240}
\newcommand{\NativeAddressCapGiB}{32}
\newcommand{\NativeAuditedArrays}{14}
\newcommand{\NativeAuditedScalars}{7}
\newcommand{\NativeAuditedMotifLists}{3}

\newcommand{\ControlPlannedRuns}{720}
\newcommand{\ControlSuccessfulRuns}{720}
\newcommand{\ControlTimeoutRuns}{0}

\newcommand{\ControlCompleteConditions}{48}
\newcommand{\ControlPlannedConditions}{48}
\newcommand{\ControlPruningMinimum}{1.055}
\newcommand{\ControlPruningMaximum}{1.114}
\newcommand{\ControlReplacementMinimum}{1.321}
\newcommand{\ControlReplacementMaximum}{1.424}
\newcommand{\ProfileEligibleReductionZero}{26.30}
\newcommand{\ProfileEligibleReductionFour}{41.54}
\newcommand{\ControlMaximumObjectiveDeltaUnits}{5.82\times10^{-11}}
\newcommand{\ScalingSixteenMinimum}{10.00}
\newcommand{\ScalingSixteenMaximum}{12.77}
\newcommand{\WorkstationPackedWallMedian}{236.54}
\newcommand{\WorkstationPackedRSSMedian}{14.43}
\newcommand{\WorkstationSquareWallMedian}{237.39}
\newcommand{\WorkstationSquareRSSMedian}{20.09}
\newcommand{\WorkstationPackedOverSquareWallMedian}{0.996}
\newcommand{\WorkstationPackedOverSquareWallMin}{0.985}
\newcommand{\WorkstationPackedOverSquareWallMax}{1.007}

\newcommand{\WorkstationPackedMemoryReductionPercent}{28.20}

\newcommand{\CommodityPackedWallMedian}{126.42}
\newcommand{\CommodityPackedRSSMedian}{14.43}
\newcommand{\CommoditySquareWallMedian}{128.26}
\newcommand{\CommoditySquareRSSMedian}{20.10}

\newcommand{\CLICheckedExecutions}{70}
\newcommand{\CLIDistinctOutputs}{5}
\newcommand{\CLIMaximumDPResidual}{1.75\times10^{-10}}
\newcommand{\ProfileThreadPairs}{72}

\newcommand{\ControlledCompactRows}{%
0 & 1 & 12/12 & 1.055 [1.051, 1.063] & 1.321 [1.312, 1.329] & 1.249 [1.242, 1.256] \\
0 & 16 & 12/12 & 1.091 [1.068, 1.107] & 1.399 [1.366, 1.416] & 1.276 [1.266, 1.290] \\
4 & 1 & 12/12 & 1.067 [1.057, 1.071] & 1.326 [1.322, 1.342] & 1.246 [1.241, 1.261] \\
4 & 16 & 12/12 & 1.114 [1.100, 1.136] & 1.424 [1.399, 1.438] & 1.270 [1.250, 1.290] \\
}

\newcommand{\ScalingCompactRows}{%
Q54TT4 (925) & 0 & 127.54 & 2.23 & 4.20 & 7.36 & 12.77 \\
Q54TT4 (925) & 4 & 120.56 & 2.12 & 4.07 & 7.25 & 12.13 \\
Q61879 (1,976) & 0 & 584.36 & 1.96 & 3.72 & 6.36 & 10.26 \\
Q61879 (1,976) & 4 & 583.29 & 1.98 & 3.73 & 6.39 & 10.00 \\
}

\title{\sparsedesign: Scaling Exact Coding-Sequence Design}
\author{Hao Lin\\{\small Department of Mechanical and}\\{\small Aerospace Engineering}\\{\small Rutgers University, Piscataway, NJ, USA}
\and Jingjin Yu\thanks{Correspondence: \href{mailto:jingjin.yu@cs.rutgers.edu}{jingjin.yu@cs.rutgers.edu}.}\\{\small Department of Computer Science}\\{\small Rutgers University, Piscataway, NJ, USA}}
\date{Technical report --- 26 September 2026}
\begin{document}
\maketitle
\begin{abstract}
Exact optimization of synonymous coding sequences under a joint folding-energy and codon-usage
objective is limited by expensive dynamic-programming splits and large working sets.
\sparsedesign{} applies candidate sparsification to the multiloop recurrence of a
\turner{} dangle-0 solver over a weighted codon automaton. A direct branch is retained only
when it strictly improves on every partitionable or endpoint-unpaired realization of the
same endpoint states. We prove equivalence to the dense recurrence in real arithmetic, under
an explicit scalar branch-interface assumption. With $N$ automaton states, edge set $E$ and
$Z$ retained candidates, multiloop work is $O(N^2+N|E|+NZ)$; worst-case time remains cubic
for bounded-width automata and total memory remains quadratic. Endpoint ownership permits
parallel candidate construction without locks. While synthetic stress families can benefit
little from sparsification and exhibit near-quadratic candidate growth, natural proteins show
substantial candidate-count reductions. In our \CorpusRows{}-task campaign, the 2,000-protein
human-table panel has median retention of only \RetentionZero\% at $\lambda=0$ and
\RetentionFour\% at $\lambda=4$, corresponding to approximately \RetentionReductionZero-fold and \RetentionReductionFour-fold
reductions relative to all feasible direct intervals. The primary performance experiments
use an AMD EPYC 7313 server. For human Dp427c (11,031 nt, $\lambda=0$),
16-thread packed \sparsedesign{} achieves five-run medians of
\WorkstationPackedWallMedian{} seconds wall-clock time and \WorkstationPackedRSSMedian{} GiB peak RSS.
Compared with the single-thread local dense LinearDesign fork on the same server
(4,912 seconds, 402.10 GiB RSS), this gives a 20.8-fold wall-clock speedup and a
27.9-fold peak-memory reduction. On a Core i9-14900KF commodity PC with 64 GiB RAM,
the same input, layout and thread count achieve \CommodityPackedWallMedian{} seconds
and \CommodityPackedRSSMedian{} GiB RSS.
\end{abstract}

\section{Introduction}
\label{sec:intro}
A protein generally admits exponentially many synonymous coding sequences. Jointly choosing
codons and a compatible secondary structure permits exact optimization of a folding-energy
and codon-usage objective, but the practical cost grows sharply with sequence length.
LinearDesign expresses the synonymous design space as a weighted deterministic finite automaton
(DFA) and solves the resulting lattice-parsing problem \citep{Zhang2023LinearDesign}.
For exact loop-based folding, the familiar cubic-time and quadratic-memory costs make long
coding regions and repeated objective sweeps demanding.

The multiloop recurrence repeatedly combines a branch-containing prefix with a final closed
branch. Many direct branches have an alternative realization with the same endpoints and no
greater cost. Fixed-sequence sparse folding exploits this dominance relation
\citep{Backofen2011,Will2016}. In coding-sequence design, the endpoints are automaton states:
branching and merging paths must remain compatible, and codon costs must be charged once.
Moreover, whether a scalar branch cost is sufficient depends on what the consuming loop can
observe about its boundary nucleotides.

This report presents \sparsedesign{}, which specializes this principle to weighted codon
automata under \turner{} dangle-0 energetics. It provides a state-consistent recurrence,
dense-equivalence proof and parallel implementation. An otherwise identical scalar dense
control isolates pruning. Correctness, density and timing experiments characterize practical
gains; synthetic and constrained inputs demonstrate the limits of empirical sparsity.

For human Dp427c (11,031 nt, $\lambda=0$), 16-thread packed \sparsedesign{} achieves
a 20.8-fold wall-clock speedup and a 27.9-fold peak-memory reduction over the single-thread
local dense LinearDesign fork on the same EPYC 7313 server (\cref{sec:performance}).
These ratios include implementation and thread-count differences.

\section{Relation to prior work}
\label{sec:related}
\paragraph{Exact coding-sequence design.}
Polynomial-time codon choice predates modern lattice formulations \citep{Cohen2003}.
CDSfold incorporates amino-acid constraints into a loop-energy dynamic program
\citep{Terai2016}. LinearDesign adds a weighted codon-DFA formulation for the joint energy/CAI
objective and a separate beam-pruned approximation \citep{Zhang2023LinearDesign,Huang2019}.
DERNA and LinearCDSfold provide alternative exact formulations and supported weighted-sum
trade-off search \citep{Gu2024DERNA,Ju2025LinearCDSfold,Liu2026LinearCDSfold}.
A fair comparison must match the energy model, exact versus approximate mode, codon-table
normalization, $\lambda$ units and stop-codon convention; matching an MFE alone is insufficient
when the codon penalty also enters the objective.

\paragraph{Candidate sparsification.}
Accessible-motif and sparse-folding methods established the general reassociation argument
\citep{Wexler2007,Backofen2011}. SparseMFEFold is the closest fixed-sequence predecessor:
its strict candidate test and rightmost normalization yield the same algebraic
$O(n^2+nZ)$ bound under a no-dangle Turner model \citep{Will2016}.
SparseRNAfolD and memerna extend sparse folding to richer boundary interactions
\citep{Gray2024,Courtney2025}. Cho, Chang and Lu study sparse CDS design on a branching codon
representation under a base-pair energy model \citep{Cho2026SparseCDS}. Their model and sparsity
parameters differ from the loop-based state-interval quantity $Z$ here. We therefore claim
neither the first sparse CDS algorithm nor the first sparse algorithm on a codon graph.
Our theorem addresses the loop-based scalar interface and compatible weighted-state joins.
Parallel fixed-sequence folding is also established, for example in GTfold
\citep{Swenson2012GTfold}.

\paragraph{Model scope and recent directions.}
\citet{Ward2025Comparison} provide a correctness benchmark and explicitly propose combining
codon graphs with Aho--Corasick automata for forbidden-motif avoidance. We credit that proposal
and use their published regression fixtures; the contribution here is its evaluated
integration with the sparse solver.
The April 2026 preprint of \citet{Fornace2026Tensor} develops tensor-based codon-constrained
Boltzmann sampling, marginals and joint sequence/structure/configuration minimization,
with CPU/GPU parallelism. Its minimization is over a microstate triplet, whereas finding the
sequence with lowest ensemble free energy retains a sum over its structures. Its tensor
energy model and reported inference/sampling timings do not provide a matched Turner-d0
minimization speed comparison.
The available ChimeraFold workshop manuscript describes an exact parallel codon-graph DP
without sparsification; its software is described as proprietary \citep{Anonymous2026ChimeraFold}.
Its acceptance and public release status could not be verified from primary sources.

EnsembleDesign uses a probabilistic lattice and continuous relaxation for ensemble-based
design \citep{Dai2025EnsembleDesign}; JAX-RNAfold supplies scalable differentiable folding
\citep{Krueger2025JAX}. RNARL, published as an early accepted journal article in July 2026,
combines reinforcement learning with multi-objective sequence generation \citep{Lin2026RNARL}.
The May 2026 GoForth preprint studies conditional generation under combined structure,
sequence and coding constraints \citep{Lindsey2026GoForth}.
CoDOn, an August 2026 early accepted article, uses NSGA-II for multiple codon and local-context
criteria with user constraints \citep{Moon2026CoDOn}.
These methods broaden the objective and search-method landscape. Our dense-equivalence theorem
addresses the explicitly stated MFE--CAI objective, not ensemble or learned-objective
optimality. The literature search and reference audit through 26 September 2026 support this
model-specific positioning without a priority claim for sparse CDS optimization.

\section{Problem and model}
\label{sec:model}
Let protein input $p$ specify $A$ codon positions and let $n=3A$. Ordinarily $A$ is the amino-acid
count; an explicit stop symbol adds one codon position. No stop is appended
implicitly. For a codon table with relative adaptiveness
$w(c)=f(c)/\max_{c'\sim c}f(c')\in(0,1]$, define
$C(r)=-\sum_{c\in r}\log w(c)=-A\log\cai(r)$. The objective is
\begin{equation}
 F^*(p,\lambda)=\min_{r\in\operatorname{Syn}(p)}\;
 \min_{S\in\mathcal S_{3,30}(r)}
 \left\{E_{\mathrm{T04,d0}}(r,S)+\lambda C(r)\right\},\qquad \lambda\geq0.
 \label{eq:objective}
\end{equation}
Here $\mathcal S_{3,30}$ contains pseudoknot-free structures with AU, CG and GU pairs,
a minimum hairpin size of three, and at most 30 unpaired nucleotides in each bulge/internal
loop. Energies use \turner{} parameters \citep{Mathews2004} at $37^\circ$C. Our model enables
special tri-, tetra- and hexaloops and lonely pairs, but excludes G-quadruplexes,
dangling-end contributions and coaxial stacking.
Multiloop costs are affine in the number of incident helices and unpaired bases, with
additional terminal-pair penalties; the closing and branch terms are accounted for explicitly
by the typed interface in \cref{eq:branch}.
The cap is a restriction of the optimization domain, not a restriction on what a separate
structure-energy evaluator can score. CAI follows \citet{SharpLi1987}.
The CLI uses kcal/mol for $\lambda$ and multiplies it by 100 before calling the C++ core, which
combines double-precision codon costs with integer energies in units of $0.01$ kcal/mol.

A layered DFA is $\mathcal A=(Q,E,s,t,\rho,\ell,q)$, where $\rho$ is nucleotide position and
each edge advances one layer, emits $\ell(e)\in\{A,C,G,U\}$, and carries an unscaled codon
penalty $q(e)$. The complete $-\log w(c)$ is assigned to the codon's third edge; the other two
edges have zero weight. Thus source-to-sink path sums reproduce $C(r)$ exactly as an algebraic
objective. We write $N=|Q|$ and $g(a,b)=\rho(b)-\rho(a)$.
Intervals cover paths between specific states, not just nucleotide positions. Distinct states
at the same layer have no empty path between them. The unmodified codon construction has
bounded width and degree, giving $N=\Theta(n)$ and $|E|=O(N)$.

\begin{figure}[tb]
\centering
\begin{tikzpicture}[>=Stealth,vertex/.style={circle,draw,inner sep=2.5pt},font=\small]
\node[vertex] (a) at (0,0) {$a$};
\node[vertex] (c1) at (3,0.45) {$c_1$};
\node[vertex] (c2) at (3,-0.45) {$c_2$};
\node[vertex] (b) at (6,0) {$b$};
\draw[->,thick,blue!65!black] (a) -- node[above,sloped] {prefix path} (c1);
\draw[->,thick,green!40!black] (c2) -- node[below,sloped] {closed branch} (b);
\draw[dashed,gray] (3,-0.9) -- (3,1.05);
\node at (3,1.3) {$\rho(c_1)=\rho(c_2)$};
\node at (3,-1.18) {$c_1\ne c_2$: concatenation is infeasible};
\end{tikzpicture}
\caption{Equal nucleotide positions do not identify automaton states. The illustrated
prefix and branch cannot be concatenated; every DP split must use the same joining
state. Edges shown schematically may stand for multi-edge paths.}
\label{fig:state-join}
\end{figure}

Following the codon-graph/Aho--Corasick proposal of \citet{Ward2025Comparison}, intersecting
the codon DFA with a deterministic forbidden-motif automaton preserves this layered path model.
Matcher history crosses codon boundaries, and transitions entering forbidden states are removed.
A fixed finite motif inventory preserves bounded-width asymptotics, but increasing the reachable
matcher state count $K$ can multiply lattice size by $K$ and the number of interval slots by
$K^2$. The observed state growth is measured in \cref{sec:constraints}.

\section{Candidate-sparse multiloops}
\label{sec:algorithm}
\subsection{Complete dense and sparse recurrences}
A typed closed value $V_\tau(a,b)$ includes the optimal path and loop energy of an interval
closed by oriented pair type $\tau$. Hairpin, stack/bulge/internal-loop and multiloop
closures produce these values using compatible endpoint states and the model restrictions
in \cref{sec:model}. A branch attached to a multiloop has cost
\begin{equation}
 B_\tau(a,b)=V_\tau(a,b)+\beta_\tau,\qquad
 B(a,b)=\min_\tau B_\tau(a,b),
 \label{eq:branch}
\end{equation}
where $\beta_\tau$ includes its multiloop branch and terminal-pair penalties. Typed values
are retained for external and two-loop consumers. The scalar minimum is formed only at
the multiloop interface.

Let $u(e)=\lambda q(e)+c_{\rm ML}$ combine an unpaired edge's codon penalty with the per-base
multiloop term. $D_1(a,b)$ and $D_2(a,b)$ denote fragments containing at least one and at least
two top-level branches. Their right-normal dense recurrences decompose fragments from the right:
\begin{align}
 D_2(a,b)&=\min\left\{
  \min_{e=(b',b)}[D_2(a,b')+u(e)],\;
  \min_{\rho(a)<\rho(c)<\rho(b)}[D_1(a,c)+B(c,b)]\right\},
  \label{eq:dense2}\\
 P_D(a,b)&=\min\left\{
  D_2(a,b),\;
  \min_{e=(a,a')}[u(e)+D_1(a',b)],\;
  \min_{e=(b',b)}[D_1(a,b')+u(e)]\right\},
  \label{eq:partitionable}\\
 D_1(a,b)&=\min\{B(a,b),P_D(a,b)\}.
 \label{eq:dense1}
\end{align}
All minima range over actual edges or concrete intermediate states. An empty minimum and
an infeasible interval have value $+\infty$. One- and two-branch values are $+\infty$ on
zero-span intervals, including distinct same-layer states; a direct branch has strictly
positive span. Arithmetic is over finite real costs and the infeasibility sentinel.

The alternatives in $P_D$ realize the same endpoints without using $B(a,b)$ as the sole
branch. Retain a direct branch only if it is strictly better:
\begin{equation}
 \mathcal C_b=\{c:B(c,b)<P_D(c,b)\},\qquad
 Z=\sum_b|\mathcal C_b|.
 \label{eq:candidates}
\end{equation}
Replace only the dense split scan by
\begin{equation}
 S_2(a,b)=\min\left\{
  \min_{e=(b',b)}[S_2(a,b')+u(e)],\;
  \min_{\substack{c\in\mathcal C_b\\\rho(c)>\rho(a)}}[S_1(a,c)+B(c,b)]\right\}.
 \label{eq:sparse2}
\end{equation}
Define $P_S$ by \cref{eq:partitionable} with $S$ replacing $D$, and set
$S_1=\min\{B,P_S\}$. The online implementation uses $B<P_S$; the proof below establishes
that this yields the same candidate decision as the dense test before any larger interval uses it.
A candidate list stores only the start state and scalar cost. Direct terms are always available
to $S_1$, including terms that are not retained as later split candidates.

\subsection{Dense equivalence}
\begin{theorem}[Relative exactness]
\label{thm:exact}
Assume (i) a finite layered acyclic automaton with strictly advancing edges and branches;
(ii) additive costs and joins only at identical endpoint states; (iii) a multiloop parent
consumes type $\tau$ through $x+B_\tau(a,b)$ with $x$ independent of $\tau$; and
(iv) the surrounding dense Turner recurrences preserve a compatible path and charge each
edge exactly once. In real arithmetic, the sparse and dense multiloop values agree at every
interval. If each typed closed cell depends only on smaller-span closed/multiloop cells,
the complete sparse solver has the same optimum as the dense solver under \cref{eq:objective}.
\end{theorem}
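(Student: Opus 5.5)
We argue by strong induction on the span $g(a,b)$, holding the closed values $V_\tau$ (hence $B$) fixed as common inputs. The claim at each span is that $S_1(a,b)=D_1(a,b)$, $S_2(a,b)=D_2(a,b)$ and $P_S(a,b)=P_D(a,b)$ for every interval $(a,b)$ of that span. A consequence is that the online test $B(c,b)<P_S(c,b)$ selects exactly the dense candidate set $\mathcal C_b$ of \cref{eq:candidates} before any larger interval reads it. The base case holds because zero-span intervals are $+\infty$ in both systems, and by (i) every edge and branch strictly advances, so all recursive references in \cref{eq:dense2,eq:partitionable,eq:sparse2} are to strictly smaller spans.

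For the inductive step, the inequality $S_2\ge D_2$ is immediate. Every term of \cref{eq:sparse2} is a term of \cref{eq:dense2}, since candidates are concrete states with $\rho(a)<\rho(c)<\rho(b)$, and by induction $S_1(a,c)=D_1(a,c)$. Then $P_S\ge P_D$ and $S_1\ge D_1$ follow from identical formulas with equal smaller-span inputs.

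The converse $S_2(a,b)\le D_2(a,b)$ is the main obstacle: we must show that every dense split term $D_1(a,c)+B(c,b)$ with $c\notin\mathcal C_b$ is matched by some sparse term. For such $c$ we have $B(c,b)\ge P_D(c,b)$, so we expand $P_D(c,b)$ into one of its three realizations and reassociate.
\begin{itemize}
\item[(a)] If $P_D(c,b)=D_2(c,b)$, then $D_1(a,c)+D_2(c,b)$ decomposes further. Concatenating at the shared state $c$, which (ii) permits, yields a fragment from $a$ to $b$ with at least two branches whose last split point or trailing edge lies strictly to the right of $c$.
\item[(b)] If $P_D(c,b)=D_1(c,b')+u(e)$, the term is bounded by $D_1(a,c)+D_1(c,b')+u(e)$, which is at least $D_2(a,b')+u(e)$ and therefore covered by the edge-extension term of $S_2(a,b)$, since $D_2(a,b')=S_2(a,b')$ by induction.
\item[(c)] If $P_D(c,b)=u(e)+D_1(c',b)$, we absorb $e=(c,c')$ into the prefix: $D_1(a,c)+u(e)\ge D_1(a,c')$ by the trailing-edge term of \cref{eq:partitionable}. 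This gives a split at $c'$ with $\rho(c')>\rho(c)$.
\end{itemize}

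To close the argument we need a well-founded measure. Among all optimal right-normal derivations of $D_2(a,b)$, we pick one whose final split point $c$ is rightmost, following the normalization of SparseMFEFold. Cases (a) and (c) then produce an equally good derivation with a strictly later final split, or one ending in an edge, which is a contradiction unless $c\in\mathcal C_b$. Here (a) requires a short sub-induction establishing an associativity lemma, $D_1(a,c)+D_2(c,b)\ge \min_{c'}[D_2(a,c')\text{-style terms}]$, proved by unrolling $D_2(c,b)$ along its own right-normal recurrence. Strictness of the candidate test is what makes ties harmless, since a tie is always resolved by a non-candidate realization. Once the rightmost split is a candidate, the sparse scan contains that term and $S_2\le D_2$ follows.

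For the global statement, (iii) justifies replacing typed branches by the scalar $B$ at the multiloop interface: because $x$ does not depend on $\tau$, we have $\min_\tau(x+B_\tau)=x+B$, and the multiloop closing value built from $S_1,S_2$ equals the dense one. A second induction on span over the typed closed cells, using the hypothesis that each depends only on smaller-span closed and multiloop cells, shows that all $V_\tau$ agree. The external loop consumes equal inputs, so by (iv) both solvers evaluate the same compatible-path objective \cref{eq:objective} and return the same optimum.
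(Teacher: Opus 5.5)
Your route differs from the paper's, and it works once one step is tightened. The paper first gives the dense recurrences a path semantics. With $B$ fixed, each finite $B(a,b)$ is an atomic arc, and $D_1,D_2$ are minima over arc/edge paths with at least one or two top-level arcs. Concatenation at a shared state is then trivially a path with additive cost and branch count, so no associativity lemma is needed. The paper normalizes an optimal path by repeatedly replacing a non-candidate rightmost arc $(c,d)$ with an attaining alternative from $P_D(c,d)$. Each replacement strictly shrinks the span of the rightmost arc, so the process terminates at a path whose rightmost arc is a candidate. The sparse recurrence then reads this off: the prefix by induction, the unpaired suffix by edge extensions.

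You instead stay at the level of the equations and use the SparseMFEFold-style extremal choice: an optimal right-normal derivation whose final split has maximal $\rho(c)$. The paper's view is shorter and makes reassociation free, but it relies on first identifying the recurrences with path-family minima. Yours needs no semantic interpretation, at the price of an explicit associativity lemma and more case analysis.

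The step to tighten is (c). The expression $D_1(a,c')+D_1(c',b)$ is a split term of \cref{eq:dense2} only when $D_1(c',b)=B(c',b)$; otherwise you must expand $P_D(c',b)$ again. Case (b) also silently uses $D_1(a,c)+D_1(c,b')\ge D_2(a,b')$, which is associativity, not a recurrence term.

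Both gaps are covered by one lemma. For $\rho(a)<\rho(c)<\rho(b)$, $D_1(a,c)+P_D(c,b)$ is at least either some $D_2(a,b')+u(e)$ with $e=(b',b)$, or some $D_1(a,c^*)+B(c^*,b)$ with $\rho(c^*)>\rho(c)$. Prove it by induction on $g(c,b)$ from the pair $D_1(a,c)+D_2(c,d)\ge D_2(a,d)$ and $D_1(a,c)+D_1(c,d)\ge D_2(a,d)$. Those two inequalities follow by induction on $g(c,d)$, proving the $D_2$ version first at each span, because the $D_1$ version's $D_2(c,d)$ branch uses it at the same span.

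Finally, make the closing Turner-cell induction joint with your multiloop step at each span, as the paper does. Since that step reads $B$ only on spans at most $g$, this is immediate.
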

\begin{proof}
First hold the direct costs $B$ fixed. Regard a finite $B(a,b)$ as an atomic arc on the layered
graph. A multiloop fragment is a path of branch arcs and unpaired edges; only top-level arcs
count toward its branch count. Decomposing a path with at least two branches at its final
unpaired edge or rightmost branch gives \cref{eq:dense2}. A path with at least one branch is
either a sole direct arc, has at least two branches, or has an unpaired edge at an endpoint.
Peeling that edge gives \cref{eq:partitionable,eq:dense1}. This also explains equivalence
with a left-normal dense enumeration: both enumerate the same path family.

Take a dense path with at least two branches and rightmost branch $B(c,d)$. If $(c,d)$ is
not retained, $P_D(c,d)\leq B(c,d)<\infty$. The finite path family therefore has an attaining
alternative. Replace the arc by that alternative. The new path has the
same endpoint states, contains at least one branch in place of the old arc, and has no larger
cost. Hence the requirement of at least two branches and automaton feasibility are preserved.
An endpoint-unpaired alternative shortens the interval occupied by its rightmost branch;
a two-branch alternative contains at least two ordered positive-span arcs, so its rightmost
branch also has strictly smaller span. Repetition therefore terminates at a path whose
rightmost branch is a candidate. A tie causes no difficulty and need not be stored.

Induct on interval span, evaluating $S_2$ before $S_1$. In a normalized dense optimum, the
branch-containing prefix before the final candidate has smaller span. Its value equals
$S_1$ by induction; any unpaired suffix is added by repeated edge-extension terms in
\cref{eq:sparse2}. Thus $S_2\leq D_2$. Conversely, every sparse realization is admitted by
the dense recurrence, so $S_2\geq D_2$. Equality of $S_1$ follows from its unchanged direct
term, the equality just established for $S_2$, and equal smaller-span extension values.
Consequently $P_S=P_D$, and the online candidate test is the dense test. In a split for
$(a,b)$, the guard $\rho(c)>\rho(a)$ makes the consumed candidate $(c,b)$ strictly
smaller-span than $(a,b)$. Its decision is therefore complete before consumption, avoiding
same-span circularity.

Assumption (iii) justifies the scalar collapse:
$\min_\tau[x+B_\tau]=x+\min_\tau B_\tau=x+B$.
Under dangle-0, the type-dependent multiloop contribution is already inside $\beta_\tau$;
no flanking or neighboring-branch term changes the minimizing type. This does not collapse
$V_\tau$ for consumers whose interfaces still depend on type.

Finally apply a joint span induction to Turner cells. A hairpin closure is local to its
compatible path, a two-loop uses a smaller inner closed span, and a multiloop uses a smaller
$D_2$ span after consuming its boundary pair. Equal smaller cells therefore produce equal
$V_\tau$ and $B$ at the current span; the preceding multiloop argument then produces equal
$D_1,D_2$ and candidate decisions. The unchanged external recurrence completes the induction.
\end{proof}

The theorem proves equivalence to the stated dense model, rather than rederiving the entire
dense Turner grammar. Dangles or coaxial stacking can make the preferred branch type depend
on its consumer and require a richer candidate interface. The proof also does not establish
bit-identical floating-point execution: the implementation uses doubles, and its numerical
and reconstruction checks are reported separately.

\subsection{Work and storage}
There are $O(N^2)$ interval cells. Incident-edge extensions cost $O(N|E|)$ in aggregate,
and each retained candidate is visited for at most $O(N)$ left endpoints. Thus multiloop
work is $O(N^2+N|E|+NZ)$, or $O(n^2+nZ)$ for bounded-width codon automata.
The candidate index occupies $O(N+Z)$ storage. It does not replace the quadratic closed,
gap, codon-cost and multiloop tables. With bounded width and fixed internal-loop cap,
the complete solver remains worst-case cubic in RNA length and quadratic in memory.
The finite synthetic panels in \cref{sec:density} illustrate why empirical sparsity
cannot be promoted to a universal subcubic guarantee.

\section{Parallel evaluation and memory organization}
\label{sec:implementation}
The wavefront orders intervals by nucleotide span. Each span has three phases separated by barriers:
closed/gap values, multiloop values, then external values. The second phase computes $S_2$
before $S_1$ within one interval. For fixed span $g$ and right endpoint $b$, only the start-layer
iteration $\rho(b)-g$ owns that endpoint. It may append to $\mathcal C_b$ without a lock.
The guard $\rho(c)>\rho(a)$ admits only candidate intervals shorter than $g$, so candidates
added for another start state on the current layer are never consumed prematurely.
Dynamic scheduling distributes start-layer iterations; it does not reorder the recurrence
within an interval. The external recurrence is needed only at the final-state column.

\begin{algorithm}[tb]
\caption{Sparse wavefront; each parallel loop ends with a barrier.}
\label{alg:wavefront}
\begin{algorithmic}[1]
\For{$g=1,\ldots,n$}
 \State \textbf{parallel for} $i=0,\ldots,n-g$
 \Statex \hspace{1em} fill codon/gap and typed closed cells for layers $(i,i+g)$
 \State \textbf{parallel for} $i=0,\ldots,n-g$
 \Statex \hspace{1em} for each endpoint pair: fill $S_2$, then $P_S,S_1$; retain $B<P_S$
 \State \textbf{parallel for} $a$ at layer $n-g$
 \Statex \hspace{1em} fill the external value ending at final state $t$
\EndFor
\end{algorithmic}
\end{algorithm}

\begin{table}[tb]
\centering\small
\caption{Mechanisms and their scope. Memory effects interact; the table is not a multiplicative
explanation of any observed end-to-end speedup.}
\label{tab:mechanisms}
\begin{tabularx}{\linewidth}{@{}p{0.27\linewidth}X@{}}
\toprule
Mechanism & Effect and limitation\\
\midrule
Scalar candidate lists & Restrict multiloop split work and branch-list storage; retain typed
closed states for other consumers.\\
Contiguous memo arenas & Remove per-cell hash lookup and use compact scalar/vector/flag objects;
heap payload and interval count remain substantial.\\
Deferred traceback & Reconstruct decisions from completed costs and lazily materialize needed
branch rows; no complete decision-table duplication.\\
Final-column external DP & Store and fill only the external values ending at the final state.\\
Forward-packed option & Omit reverse-layer interval slots; preserve same-layer base cases and
trade indexing cost against working set.\\
\bottomrule
\end{tabularx}
\end{table}

The square arena is the default; the named packed build is an explicit memory-saving alternative.
On the measured 64-bit implementation, two vector objects, three scalar doubles and two flags use
about 74 bytes per interval slot before heap payload. A candidate slot occupies 16 bytes plus
one 24-byte vector header per right endpoint. Candidate-container savings are consequently distinct
from whole-process memory savings against a historical implementation.

Traceback replays dense decisions. Every design is checked against its DFA path and a
separately recomputed structure-energy-plus-path-penalty objective, with full-precision
DP/residual diagnostics. Separate profiling instantiations count splits and measure phase
wall times; ordinary solves omit counters and phase clocks. Phase times include instrumentation,
scheduling and barriers and cannot substitute for ordinary-kernel timings. Benchmarks set affinity
externally; the runtime team-cap heuristic neither identifies physical cores nor pins workers.

\section{Evaluation design}
\label{sec:evaluation}
Separate experiments assess optimality, candidate retention, implementation speedups
and memory requirements for complete designs.

\paragraph{Frozen broad panel.}
The campaign selects 2,000 distinct reviewed, nonfragment Swiss-Prot proteins in 12
prespecified strata spanning 10--4,000 amino acids, using a seeded sampling rule from 475,681
unique eligible sequences \citep{UniProt2025}. Each is evaluated with the human table at
$\lambda\in\{0,4\}$. A prespecified 400-protein subset is reused with nine additional codon
distributions at $\lambda=4$ \citep{Nakamura2000}, for \CorpusRows{} tasks in total.
The independent audit verifies all 11 merged-file hashes, exact task membership, uniqueness,
lengths, solver/table identities, status and candidate arithmetic. The merged rows do not contain
every full designed sequence and structure, so this audit is not an independent optimality
certificate for all \CorpusRows{} designs.

\paragraph{Current correctness and stress panel.}
Release checks span published fixtures, natural proteins, objective weights, table precision,
synthetic families and motif constraints. Designs undergo translation, motif-language,
DP/reconstruction and independent ViennaRNA 2.7.2 checks under matched Turner-d0 settings
\citep{Lorenz2011ViennaRNA}; small synonymous spaces are enumerated independently.
\Cref{sec:validation} distinguishes shared and independent implementations.
The package supplies frozen inputs, summaries, sources and external-fixture notices.

\paragraph{Timing boundaries and uncertainty.}
Cost-only kernel measurements include solver allocation, forward evaluation and destruction;
input/DFA construction and traceback are excluded. Complete CLI measurements include those
additional stages. Wall time and whole-process peak resident set size (RSS) are reported
separately, and failures or resource limits remain explicit records. Profiling runs are separate
from ordinary timing runs. The primary timing, native-comparison and validation measurements
use a dual-16-core AMD EPYC 7313 server with 1 TiB RAM and GCC 11.4.0.
Benchmarks ran on a shared server, introducing variability between runs, although
overall load was generally not exceptionally high when runs were attempted.
The Core i9-14900KF Dp427c example is reported separately to demonstrate commodity-PC feasibility.
A candidate-count ratio is neither an instruction-count ratio nor a speedup estimate.
The length-balanced corpus summaries describe the selected panel; they are not Swiss-Prot
population estimates or evidence that homologous proteins are statistically independent.

\section{Correctness and robustness}
\label{sec:validation}
\subsection{Dense, reconstructed and independent objectives}
The release's production-cell test compares every $M_1$, $M_2$ and typed closed value of the
sparse implementation with the scalar right-normal dense and original dense recurrences:
72 configurations and 181,344 intervals per layout. Cases combine targeted multiloops,
seeded random proteins, three objective weights, motif products and zero/nonzero multiloop
unpaired penalties. Dense-right retains every finite direct interval, and sparse candidates
are a subset. The complete suite passes under square and packed layouts, forward-key
assertions, AddressSanitizer and UndefinedBehaviorSanitizer; GCC 14.3 and 15.2 checks also pass.
Leak detection is disabled in the ASan harness, so no leak-sanitizer claim is made.
A separate exact-integer abstract test checks 38,539 cells on 500 generated branching/merging
DAGs. It isolates the recurrence logic from floating-point energy calculations.

\begin{table}[tb]
\centering\small
\caption{External and stress validation on the EPYC server. All \ValidationDesignCases{} prespecified cases pass;
\ValidationEmittedDesigns{} produce designs and two correctly reject an infeasible language.}
\label{tab:validation}
\begin{tabularx}{\linewidth}{@{}Xr@{}}
\toprule
Study & Cases\\
\midrule
Five natural proteins, seven objective weights & 35\\
Published Ward regression fixtures \citep{Ward2025Comparison}, including 950 and 1,000 aa & 8\\
Full-precision versus rounded codon frequencies & 30\\
S and alternating YI families, 32--256 aa, three weights & 24\\
Two natural proteins, five nested motif sets, three weights & 30\\
Overlapping-motif and infeasible-language controls & 3\\
\bottomrule
\end{tabularx}
\end{table}

Every emitted design in \cref{tab:validation} agrees with independent ViennaRNA structure
energy evaluation and fixed-sequence refolding under the stated settings. The panel's largest absolute
DP-versus-separately-recomputed weighted-objective residual is $7.28\times10^{-11}$ in internal
$0.01$-kcal/mol units. The largest public floating-point energy-return discrepancy is
$2.44\times10^{-5}$ kcal/mol. Three dense controls additionally agree on 114 cases restricted to at most 360 nucleotides
and maximum lattice width 16.
The in-house structure evaluator and dense controls share energy functions with the optimizer;
the Python codon-penalty calculation and ViennaRNA checks are separate implementations.
The Ward cases test independently published failure fixtures, not whether current releases of
the tools named in their historical descriptions still exhibit those failures.

Independent synonymous enumeration performs 1,931 feasible evaluations across seven problems
(1,714 distinct RNAs), with all 30 weighted-objective comparisons passing. Nine checks lie at
or immediately around three nonzero support-line breakpoints; $\lambda=0$ provides extensively
tied cases. All 28 loop fixtures pass: 22 special hairpins and six bulge/internal-loop boundary
cases. An exactly structure-constrained ViennaRNA fold is feasible with 30 unpaired bases and
infeasible with 31, whereas its evaluator can score either structure. The in-house
unconstrained fixed-sequence optimum also agrees on each fixture; this does not imply
a hard structure-constraint API in \sparsedesign{}. These complementary checks make the
optimization/evaluation distinction in \cref{eq:objective} explicit.

\subsection{Table precision and model sensitivity}
Rounding normalized human frequencies to two decimals changes one of 15 paired designed
sequences. At $\lambda=4$, that design has a 4.60-kcal/mol higher MFE, compensated by its codon
term; its regret under the original full-precision objective is only 0.00867 kcal/mol.
Near a trade-off boundary, a small objective change can therefore accompany substantial changes
in the two components. Exact-score comparisons should preserve normalization conventions and
should not require identical sequences when optima are tied.

Ensemble-derived accessibility measures are a separate objective. Among 96 within-protein
comparisons across objective weights with distinct RNAs and different d0 MFEs, seven reverse their ordering
under average unpaired probability. No d0/d2-MFE or d0-MFE/d0-ensemble-energy reversal is observed
in this small subset. These pairwise comparisons are dependent and cannot establish general
ranking invariance. They are computational rescoring checks, not expression experiments.

\subsection{Finite-state constraints}
\label{sec:constraints}
The two natural constraint inputs contain 64 and 105 amino acids. At 16 forbidden motifs,
their automata grow from 217 and 362 nodes to 598 and 966, and maximum layer width grows from
two to eight and ten. All returned sequences avoid the complete inventory, and every constrained
objective is at least its unconstrained counterpart. At $\lambda=0$, the penalties are 0.50
and 3.40 kcal/mol; the largest weighted penalty across this panel is 8.717 kcal/mol.
The larger 16-motif case at $\lambda=0$ uses approximately 109.6 MiB before dense checks and takes about
37.7 seconds for construction, solve, traceback and statistics in the instrumented serial
EPYC validation run. These are descriptive measurements, not repeated performance estimates.
The construction demonstrates exact design under finite-state constraints and measures their state-growth cost.

\begin{figure}[tb]
\centering
\includegraphics[width=\linewidth]{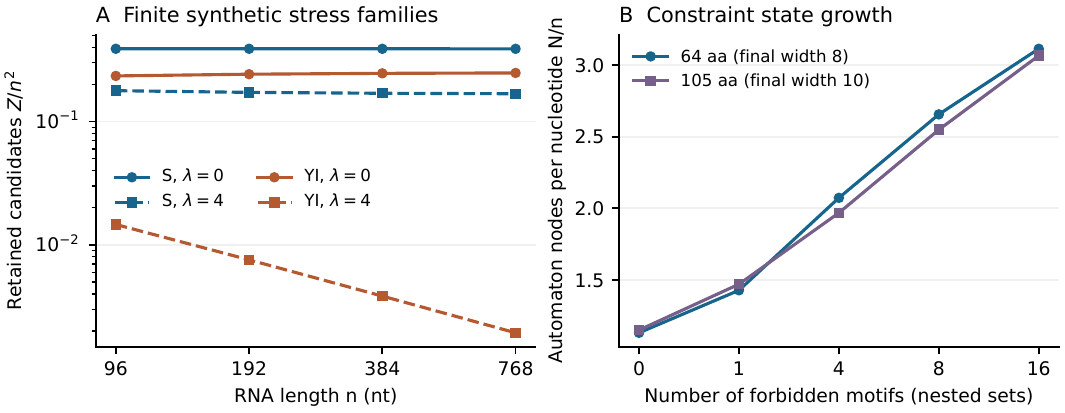}
\caption{Synthetic stress families and finite-state constraints. Left: candidate count divided
by squared RNA length for S and alternating YI proteins; the near-constant ratios at $\lambda=0$
over this finite range indicate near-quadratic growth, without establishing an asymptotic bound.
The YI ratio at $\lambda=4$ decreases over this range. Right: nodes per
nucleotide under nested motif inventories for two natural inputs. Inventory sizes are categorical.
Final maximum widths are eight and ten, versus two initially. All cases use the frozen human
table and the specified Turner-d0 model.}
\label{fig:validation}
\end{figure}

\section{Performance attribution and large-instance evidence}
\label{sec:performance}
\subsection{Three solver configurations}
The typed dense wavefront enumerates left-normal pair-indexed branch rows. The scalar dense
control uses the same right-normal recurrence, pair-type minimum, endpoint lists, wavefront
and memo layout as sparse, but retains \emph{every finite direct interval}. Its only
compile-time difference is the retention predicate. Comparing the scalar configurations isolates
pruning and its resulting list sizes. Comparing typed dense with sparse measures the complete
multiloop replacement, including orientation and pair-type collapse.

\subsection{Primary EPYC comparisons}
\label{sec:current-performance}
The frozen panel selects two proteins per band across six length bands from 80--120 to
1,800--2,000 amino acids: the lowest and highest measured human-table candidate retention
at $\lambda=0$ among the specified model-organism taxids. This purposeful panel is not a
population sample. Five paired repetitions compare three ordinary kernels at $\lambda=0,4$
with one or 16 threads. Cost-call times include solver construction, fill and destruction,
excluding DFA construction and traceback.

The EPYC study uses the frozen solver sources, built with GCC 11.4.0 and
\code{-O3 -flto}. Ordinary runs have 1,800-second wall and 32-GiB address-space
limits; complete Dp427c runs have a 3,600-second wall cap. Pilots are excluded.
An address-space limit does not simulate a physically 32-GiB host.

Of \ControlPlannedRuns{} ordinary runs, \ControlSuccessfulRuns{} succeed and
\ControlTimeoutRuns{} time out, yielding \ControlCompleteConditions/\ControlPlannedConditions{}
complete protein/weight/thread conditions. The maximum spread among successful common-input
kernel objectives is $\ControlMaximumObjectiveDeltaUnits$ internal units.
All three ratios use shared complete three-arm blocks. We take each condition's median paired
ratio, then the geometric mean across conditions with all five blocks within each
weight/thread setting. The 10,000-draw joint
within-condition repetition bootstrap quantifies conditional repeat variation; it neither
resamples proteins nor corrects systematic hardware drift.

\begin{table}[tb]
\centering\small\setlength{\tabcolsep}{3pt}
\caption{Ordinary cost-call ratios: geometric means of condition-wise median paired ratios,
with conditional 95\% bootstrap intervals. $N$ counts complete conditions out of 12.
Every row contains all 12 selected proteins, each with five complete three-arm blocks.}
\label{tab:controlled}
\begin{tabular*}{\linewidth}{@{\extracolsep{\fill}}rrclll@{}}
\toprule
$\lambda$ & $j$ & $N$ & Scalar/sparse & Typed/sparse & Typed/scalar\\
\midrule
\ControlledCompactRows
\bottomrule
\end{tabular*}
\end{table}

\begin{figure}[tb]
\centering
\includegraphics[width=\textwidth]{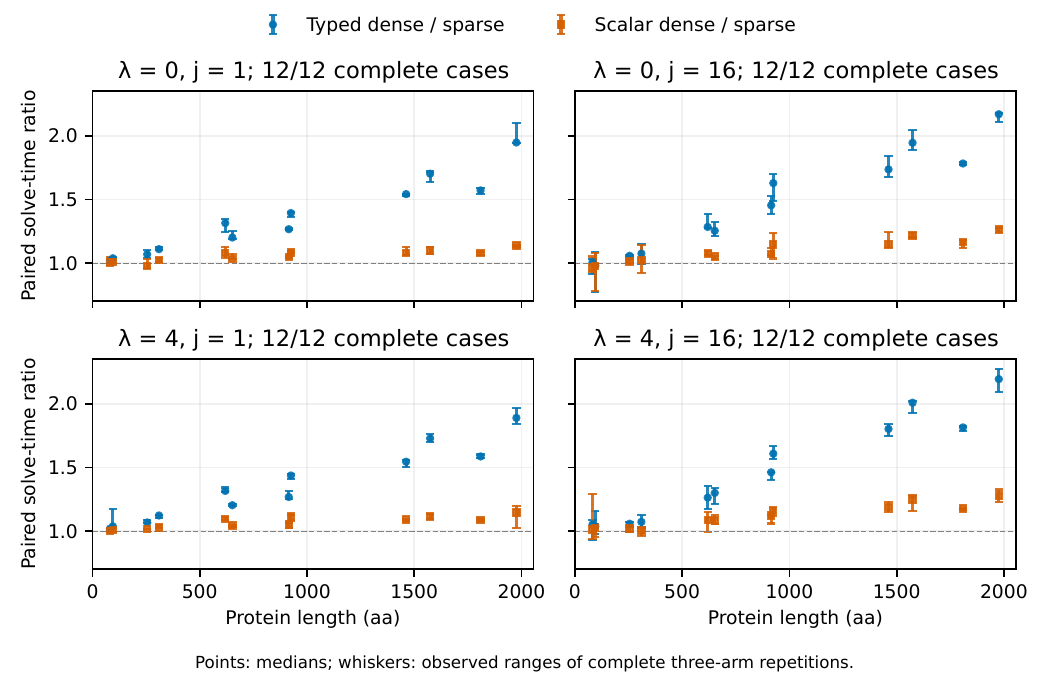}
\caption{Ordinary three-arm comparisons. Points are median paired ratios; whiskers are observed
ranges, not confidence intervals. Every protein/weight/thread condition has five
complete triplets on the EPYC server.}
\label{fig:controlled}
\end{figure}

Across four weight/thread settings, the scalar-dense/sparse aggregate ratio is
\ControlPruningMinimum--\ControlPruningMaximum; typed-dense/sparse gives
\ControlReplacementMinimum--\ControlReplacementMaximum{}
(\cref{tab:controlled,fig:controlled}). These aggregate ranges do not imply uniform
per-protein speedups. Pruning modestly improves whole-call time; the larger replacement
gain includes joint orientation, scalar-collapse and branch-representation changes.

A separate EPYC Q8VIM6 follow-up ($\lambda=4,j=1$) completes all five
three-kernel blocks. Its median paired ratios are scalar/sparse 1.105, typed/sparse 1.597, typed/scalar 1.468.
These descriptive observations are kept separate from the primary estimates.

\paragraph{Mechanism counts and profiling.}
All 144 profile runs succeed. Median scalar-dense/sparse eligible-list-visit ratios are
\ProfileEligibleReductionZero{} at $\lambda=0$ and \ProfileEligibleReductionFour{} at
$\lambda=4$. All \ProfileThreadPairs{} one-versus-16-thread comparisons have identical visit
and eligible counts. Counts measure scanned multiloop work, not candidate inventory or
elapsed speedup. Instrumentation changes execution cost, especially at 16 threads. Phase timers
therefore do not decompose ordinary-kernel speedup.

\paragraph{Complete-command scaling and memory.}
All 60 scaling runs succeed. Ratios of three-run complete-command medians give
\ScalingSixteenMinimum--\ScalingSixteenMaximum{} speedup from one to 16 threads
(\cref{tab:scaling}); this includes startup, DFA construction, design, traceback and output.
\begin{table}[tb]
\centering\small
\caption{Complete-command scaling. Each cell uses three successful runs. Wall time is seconds
at one thread; the other columns give the one-thread/indicated-thread median-time ratio.}
\label{tab:scaling}
\begin{tabular*}{\linewidth}{@{\extracolsep{\fill}}lrrrrrr@{}}
\toprule
Protein (aa) & $\lambda$ & $j=1$ wall & $j=2$ & $j=4$ & $j=8$ & $j=16$\\
\midrule
\ScalingCompactRows
\bottomrule
\end{tabular*}
\end{table}

Ten complete EPYC designs of Dp427c (3,677 aa; 11,031 nt) succeed at $\lambda=0,j=16$.
The square layout's five-run median is \WorkstationSquareWallMedian{} seconds and
\WorkstationSquareRSSMedian{} GiB peak RSS; packed gives \WorkstationPackedWallMedian{}
seconds and \WorkstationPackedRSSMedian{} GiB. Across five counterbalanced pairs, packed/square
wall ratios have median \WorkstationPackedOverSquareWallMedian{} and range
\WorkstationPackedOverSquareWallMin--\WorkstationPackedOverSquareWallMax;
median paired RSS decreases by \WorkstationPackedMemoryReductionPercent\%.
No host swap-in or swap-out pages are recorded during these ten runs.
Independent evaluation validates all \CLICheckedExecutions{} complete-command outputs
(\CLIDistinctOutputs{} distinct RNA/structure pairs) for translation and model admissibility.
Returned-structure energies agree exactly with ViennaRNA 2.7.2 in integer centikcal/mol;
the maximum independently recomputed weighted-DP residual is $\CLIMaximumDPResidual$ internal units.
Evaluation-only scoring allocates no folding matrices and does not independently prove optimality.

\paragraph{Commodity-PC demonstration.}
On a Core i9-14900KF workstation with approximately 64 GiB physical RAM, the same
Dp427c input at $\lambda=0,j=16$ completes all five square and five packed runs.
The packed layout's median complete-command time is \CommodityPackedWallMedian{} seconds
with \CommodityPackedRSSMedian{} GiB peak RSS; square uses \CommoditySquareWallMedian{}
seconds and \CommoditySquareRSSMedian{} GiB. These GCC 14.3.0 builds use a 32-GiB
address-space limit and a 1,800-second wall cap. The 16 threads span performance
and efficiency cores. This demonstrates exact full-length design on a commodity PC;
these measurements are separate from the EPYC performance estimates.

\subsection{Native public-software comparison}
\label{sec:native-baselines}
On the \NativePanelSize-protein panel, we evaluated pinned, unmodified LinearDesign,
exact LinearCDSfold and DERNA \citep{Zhang2023LinearDesign,Liu2026LinearCDSfold,Gu2024DERNA}
against the frozen square-layout \sparsedesign{} CLI at $\lambda\in\{0,4\}$.
Runs used one thread on the AMD EPYC 7313 server, a \NativeWallCapSeconds-s wall cap and
\NativeAddressCapGiB-GiB address-space cap. Complete-CLI time includes parsing, allocation,
traceback, output and native postprocessing; queueing and common rescoring are excluded.
Local builds used GCC~11.4.0 with native \code{-Ofast} (LinearDesign),
\code{-O3} (LinearCDSfold/DERNA), and \code{-O3 -flto} (\sparsedesign);
LinearDesign also loads its upstream-supplied shared library.

Ascending-length feasibility recorded \NativeFeasibilityTotal{} conditions:
\NativeFeasibilitySuccessful{} successes, \NativeFeasibilityTimeouts{} wall timeouts and
\NativeFeasibilitySkipped{} prescribed skips after the first cap per tool/weight.
Skipped lengths remain unattempted; this protocol does not establish maximum feasible lengths.
Each feasible condition received \NativeRepetitionsPerCondition{} fresh repetitions in
randomized order; all \NativeFreshRepetitions{} succeeded. Including feasibility,
\NativeSuccessfulExecutions{} successful executions yielded \NativeDistinctOutputs{}
distinct RNA/structure outputs (\NativeDistinctRNAs{} RNAs), all independently translated
and structurally validated. Digests bind sources, executables, inputs and raw outcomes;
loader/CRLF development pilots are excluded.
Feasibility observations are excluded from median timing estimates; wall and RSS fields
unavailable after termination remain missing in the retained records.

Native models differ. LinearDesign/LinearCDSfold impose 20-unpaired two-loop caps and
additional multiloop restrictions; \sparsedesign/DERNA use 30.
LinearDesign and DERNA parse binary32 frequencies before double normalization;
LinearCDSfold also normalizes in binary32, whereas \sparsedesign{} uses doubles throughout.
DERNA's weighted mode uses $\alpha=1/(1+100\lambda)$ and includes serialization/refolding
in native time. Therefore \cref{fig:native-baselines} reports native measurements;
strict LinearDesign/LinearCDSfold and $\lambda=4$ speed ratios are excluded.

Only \NativeMatchedConditions{} $\lambda=0$ conditions pass the comparison gates:
audited thermodynamic parameters, five validated successes per arm, cap-30 admissibility,
and exact integer-centikcal energy agreement across native reports, common rescoring and arms.
The source audit matches \NativeAuditedArrays{} canonical parameter arrays,
\NativeAuditedScalars{} scalars and the order of \NativeAuditedMotifLists{} special-loop motif lists.
Common scoring uses ViennaRNA~2.7.2 \citep{Lorenz2011ViennaRNA}, \turner, $37^\circ$C,
no dangles and special hairpins. \sparsedesign{} DP costs must equal that integer energy;
realized diagnostics must equal the exact binary64 round trip
$(E_{\rm centi}/100.0)\times100.0$, with no energy tolerance.
For \NativeMatchedLengths{} residues, DERNA/\sparsedesign{} median-time ratios are
\NativeMatchedRatios{}, respectively (range \NativeRatioMin--\NativeRatioMax).
These conditional ratios do not prove recurrence equivalence. Output validation does not
certify common-model optimality: LinearDesign's P08311 $\lambda=4$ structure lies \NativeOwnRNARefoldGap{} kcal/mol above
common fixed-RNA refolding.

\begin{figure}[tb]
\centering
\includegraphics[width=\textwidth]{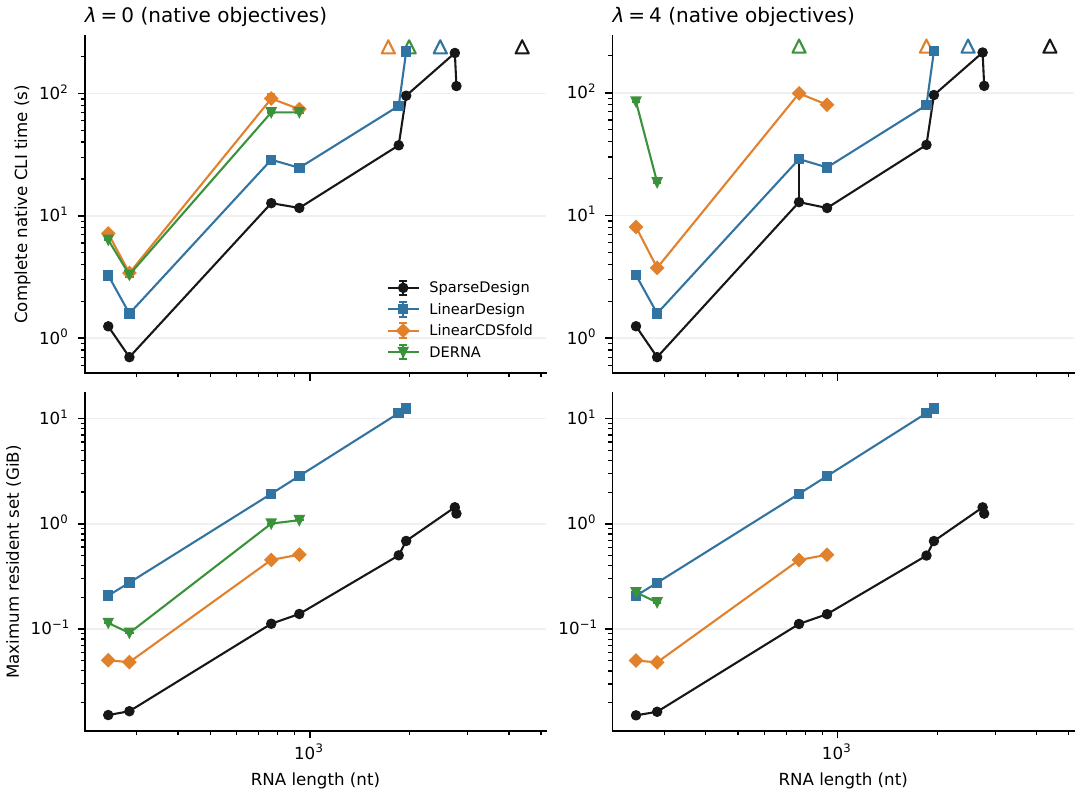}
\caption{Native complete-CLI time and peak RSS. Points show medians and observed ranges
for five fresh successes. Hollow triangles mark \NativeWallCapSeconds-s timeouts; overlapping glyphs have
display offsets without changing RNA-length data. The address-space cap is distinct from measured RSS.}
\label{fig:native-baselines}
\end{figure}

\subsection{Additional large-instance evidence}
The separate July 13 and July 23, 2026 experiments concern human Dp427c
(RefSeq \href{https://www.ncbi.nlm.nih.gov/protein/NP_000100.3}{NP\_000100.3};
3,677 amino acids, 11,031 nucleotides and 12,824 SparseDesign DFA states).
At $\lambda=0$, sparse and the \emph{locally parallelized dense LinearDesign implementation}
return MFE $-7{,}161.40$ kcal/mol; this comparator differs from stock upstream and the
current same-source controls. Runs used one NUMA node of a dual-16-core EPYC 7313 server
with 1 TiB RAM. The artifact retains each protocol's timings and RSS, the single-observation
thread ladder, five separately launched runs and the older \HistAblationPairs-pair ablation.
These observations are not pooled and do not isolate pruning.
An additional counterbalanced packed-layout comparison measured a
\HistPackedSolveIncreasePercent\% median solve-time increase and
\HistPackedRSSReductionPercent\% RSS reduction to \HistPackedRSSGiB{} GiB.
These experiments document attained scale and memory feasibility; the primary EPYC measurements
and separate commodity-PC demonstration are reported above.

\section{How sparse are the candidate lists?}
\label{sec:density}
\subsection{Natural proteins and length-range sensitivity}
Let $Z/n^2$ normalize by squared RNA length and $Z/B_{\rm finite}$ denote retained candidates
relative to all feasible direct state intervals. The latter measures candidate retention;
it does not count how often entries are consumed and is not a runtime prediction.
On the 2,000-protein human-table panel, median $Z/n^2$ is \DensityZero{} at $\lambda=0$ and
\DensityFour{} at $\lambda=4$. Median retention is \RetentionZero\% and \RetentionFour\%,
with maxima \RetentionMaxZero\% and \RetentionMaxFour\%. Every paired natural input retains
fewer candidates at the larger objective weight. These statements are restricted to this
length-balanced panel.

\begin{table}[tb]
\centering\small
\caption{Unweighted log--log candidate-growth fits, $\log Z=\alpha+\beta\log n$.
Changing the minimum included length reveals curvature hidden by the full-range exponent.
Counts and coefficients are generated from the independent frozen-campaign audit.}
\label{tab:cutoff}
\begin{tabular*}{\linewidth}{@{\extracolsep{\fill}}lrrr@{}}
\toprule
Included protein lengths & Proteins & $\beta$, $\lambda=0$ & $\beta$, $\lambda=4$\\
\midrule
Full panel & 2,000 & 1.527 & 1.487 \\
$\geq100$ aa & 1,500 & 1.610 & 1.586 \\
$\geq400$ aa & 1,167 & 1.699 & 1.677 \\
$\geq1{,}000$ aa & 747 & 1.794 & 1.773 \\
\bottomrule
\end{tabular*}

\end{table}

The full-range slopes are \DensityZeroBeta{} and \DensityFourBeta{}, with
$R^2=\DensityZeroRSquared$ and $\DensityFourRSquared$. At lengths of at least 1,000 amino acids,
they rise to approximately 1.79 and 1.77.
Density still declines across the measured long-length range, but the residual pattern and
cutoff sensitivity do not support a single asymptotic exponent. A continuous piecewise fit
with a fixed 400-amino-acid knot reduces the log-space residual sum of squares by
\DensityZeroHingeReductionPercent\% and \DensityFourHingeReductionPercent\%
(\cref{fig:density-sensitivity}); this is a descriptive comparison with
one extra fitted coefficient, not a validated extrapolation model.
To assess dependence on similar sequences, MMseqs2 \citep{Steinegger2017MMseqs2} groups the
2,000 proteins into \DensityClusterCount{} operational similarity clusters
(\DensitySingletonClusters{} singletons; largest \DensityLargestCluster{} sequences),
using at least 30\% aligned-residue identity, 80\% coverage of both sequences and
$E\leq10^{-3}$. These heuristic groups are not asserted biological families.
We use \DensityBootstrapReplicates{} paired resamples, preserving a cluster's multiplicity
across both objective weights and calibrating each resampled length stratum to its original
count. Whole-cluster 95\% bootstrap intervals are
[\DensityZeroClusterCILow, \DensityZeroClusterCIHigh] and
[\DensityFourClusterCILow, \DensityFourClusterCIHigh]. Giving every cluster equal total
weight changes the slopes to \DensityZeroClusterBeta{} and \DensityFourClusterBeta{};
also preserving length-stratum totals gives \DensityZeroCalibratedClusterBeta{} and
\DensityFourCalibratedClusterBeta{}. Thus the observed relation is insensitive to these
weighting choices, while remaining dependent on length range. All intervals quantify
conditional panel sensitivity; they do not establish independent biological sampling.

\begin{figure}[tb]
\centering
\includegraphics[width=\textwidth]{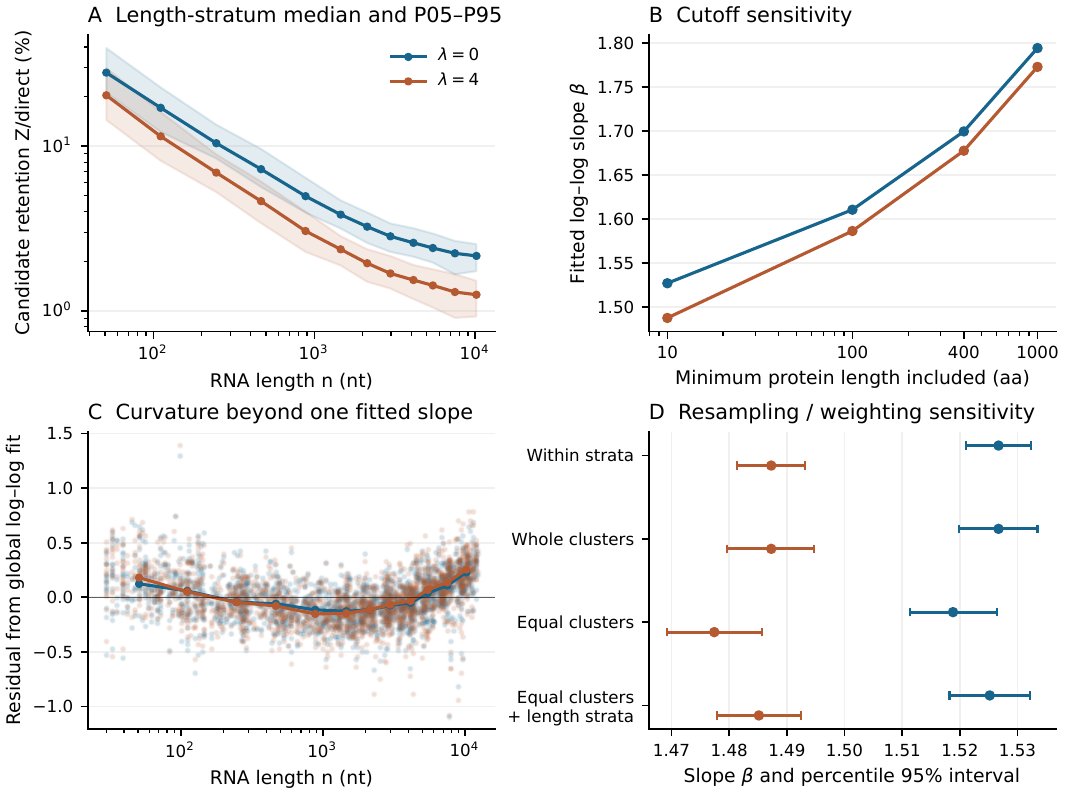}
\caption{Candidate growth on the 2,000-protein panel. Residuals, changing length
cutoffs and cluster-weighted fits qualify the full-range regression: strong contraction
persists, but an exponent near 1.5 is a finite-range summary. The artifact preserves
cluster membership, bootstrap draws, fit tables and all \CorpusRows{} input rows.}
\label{fig:density-sensitivity}
\end{figure}

\subsection{Codon distributions and denser families}
The same 400 proteins are evaluated at $\lambda=4$ under ten codon distributions.
Host-specific slopes span \DensityHostBetaMinimum--\DensityHostBetaMaximum{} and median
$Z/n^2$ spans \DensityHostMedianMinimum--\DensityHostMedianMaximum{}.
Human and yeast tables round synonymous frequencies; the other eight retain full precision
from recorded counts. Differences therefore combine distribution and precision conventions.
Reusing sequences provides a within-input robustness comparison, not ten independent
biological panels. The artifact retains the full host figure and numerical table.

Synthetic families delimit any claim based on natural proteins. At 256 amino acids and
$\lambda=0$, alternating YI retains 146,306 of 186,565 feasible direct intervals (78.42\%).
At $\lambda=4$, only 1,143 remain (0.61\%). At $\lambda=0$, the S family has $Z/n^2\simeq0.389$
throughout 32--256 amino acids, reaching 229,493 candidates at 256 amino acids. The observed growth slopes
are 2.000 for S and 2.026 for YI at $\lambda=0$; YI at $\lambda=4$ instead has slope 1.027
(\cref{fig:validation}). These finite measurements support a useful stress test, not an
asymptotic lower-bound proof. In particular, the natural-panel ``below 44\%'' statement is
not a bound on arbitrary proteins or constrained automata.

\section{Limitations and reproducibility}
\label{sec:limitations}
Only multiloop splits are candidate-sparsified; quadratic tables and dense candidate
families remain. Richer finite-state constraints can enlarge automata. Dangles, coaxial
stacking and pseudoknots require states beyond the scalar dangle-0 interface.
Dense equivalence is proved in real arithmetic;
finite precision, approximate traceback tie recognition and compiler behavior receive
regression tests, not a bit-level proof. Independent checks cannot certify every input
or reported design.

MFE and codon usage are proxies: accessibility and ensemble objectives may rank sequences
differently. No experiment here establishes expression, stability, immune-response or
therapeutic improvements. Finite-range candidate slopes and one successful large instance
establish neither universal scaling nor runtime guarantees. Historical and current protocols differ;
isolating engineering contributions requires appropriate controls.

Code, tests and experimental evidence are at
\mbox{\url{https://github.com/ru-arcl/SparseDesign}}
(release commit \href{https://github.com/ru-arcl/SparseDesign/tree/76618646bcdbd86cc27d04303f9b4d7ae3695f01}{\texttt{7661864}}).
The package includes both layouts, dense controls, profiling, reproduction instructions and
source/data checksums. Supported-envelope and inverse-$\lambda$ tools report supported lines
within stated tolerances and search limits, not every unsupported nondominated design or tied
sequence. Parameter generation pins the complete Turner-2004 text by SHA-256 and checks all
generated headers. Normal builds need neither ViennaRNA nor restricted optimizer binaries;
independent checks list separate dependencies. Original code uses Rutgers' Non-commercial
Research License (RU-NCRL); third-party terms remain separate.

\clearpage
\label{page:references}
\typeout{SPARSEDESIGN-REFERENCE-START-PAGE=\thepage}
\bibliographystyle{unsrtnat}
\bibliography{references}
\end{document}